\definecolor{ForestGreen}{rgb}{0.1333,0.5451,0.1333}
\definecolor{DarkRed}{rgb}{0.65,0,0}
\definecolor{Red}{rgb}{1,0,0}
\renewcommand{\leq}{\leqslant}
\renewcommand{\geq}{\geqslant}
\newcommand{\E}{\mathbb{E}}
\newcommand{\Col}{\mathcal{C}}
\newcommand{\eps}{\epsilon}
\newcommand{\Bin}{\textrm{Bin}}
\newcommand{\calC}{\mathcal{C}}
\declaretheorem[numberwithin=section,refname={Theorem,Theorems},Refname={Theorem,Theorems}]{theorem}
\declaretheorem[numberlike=theorem]{lemma}
\declaretheorem[numberlike=theorem]{remark}
\declaretheorem[numberlike=theorem, refname={Observation,Observations},Refname={Observation,Observations},name={Observation}]{observation}
\declaretheorem[refname={Algorithm,Algorithms},Refname={Algorithm,Algorithms},name={Algorithm}]{algorithm}
\title{Simple and Asymptotically Optimal Online Bipartite Edge Coloring}
\author[1]{Joakim Blikstad\thanks{Work done while visiting EPFL.}}
\author[2]{Ola Svensson}
\author[2]{Radu Vintan}
\author[3]{David Wajc\thanks{Work done while at Google Research.}}
\affil[1]{KTH Royal Institute of Technology \&  Max Planck Institute for Informatics}
\affil[2]{EPFL}
\affil[3]{Technion --- Israel Institute of Technology}
\date{\vspace{-1.3cm}}
\begin{document}
\maketitle

\begin{abstract}
    We provide a simple online $\Delta(1+o(1))$-edge-coloring algorithm for bipartite graphs of maximum degree $\Delta=\omega(\log n)$ under adversarial vertex arrivals on one side of the graph. Our algorithm slightly improves the result of (Cohen, Peng and Wajc, FOCS19), which was the first, and currently only, to obtain an asymptotically optimal $\Delta(1+o(1))$ guarantee for an adversarial arrival model.
    More importantly, our algorithm provides a new, simpler approach for tackling online edge coloring.
\end{abstract}

\section{Introduction} \label{sec:Introduction}

Edge coloring is a classic  problem in graph theory and algorithm design: 
\emph{Given a graph, assign colors to the edges, with no two adjacent edges sharing a color}. 
Pioneering work by K\"onig \cite{konig1916graphen} and later Vizing \cite{vizing1964estimate} showed that $\Delta$ and $\Delta+1$ colors suffice for bipartite and general graphs of maximum degree $\Delta$, respectively. (At least $\Delta$ colors are clearly needed.)
Algorithms attaining or approximating these bounds were designed in numerous models of computation, including distributed \cite{panconesi1997randomized,christiansen2023power}, parallel \cite{karloff1987efficient}, dynamic  \cite{duan2019dynamic,christiansen2023power}, and streaming algorithms \cite{charikar2021improved,ansari2022simple,chechik2023streaming,ghosh2023low,behnezhad2023streaming}. The latter includes several simple (asymptotically optimal) $\Delta(1+o(1))$-edge-coloring streaming algorithms for \emph{random-order} streams \cite{charikar2021improved,ansari2022simple}.

In contrast, \emph{online} edge-coloring algorithms (especially for \emph{adversarial order}) and their analyses are somewhat more involved \cite{aggarwal2003switch,bahmani2012online,cohen2019tight,bhattacharya2021online,saberi2021greedy,kulkarni2022online,naor2023online}.
The only truly simple online edge coloring algorithm known is the trivial $2$-approximate greedy algorithm, which is optimal only for the low-degree regime $\Delta=O(\log n)$ \cite{bar1992greedy}.
More involved algorithms were developed for the high-degree setting.
For example, all known algorithms for $(\alpha+o(1))$-approximate adversarial-order online edge coloring with $\alpha<2$ for $\Delta=\omega(\log n)$ \cite{cohen2019tight,saberi2021greedy,kulkarni2022online,naor2023online} rely on interleaved 
invocations of online matching subroutines that compute a matching that matches each edge $e$ with probability at least $1/(\alpha\Delta)$.\footnote{Such matchings can be obtained by sampling a color in an $\alpha\Delta$ coloring, so these problems are basically equivalent.}
The outer loop using such online matching algorithms, introduced by \cite{cohen2019tight}, is not particularly complicated, and can be described and analyzed in about one page (see e.g., \cite[Section 6]{saberi2021greedy}). 
However, the matching algorithms used within this framework and their analyses are quite non-trivial \cite{cohen2018randomized,cohen2019tight,saberi2021greedy,kulkarni2022online,naor2023online}. 

We break from the above template, avoiding this outer loop and subsequent complicated online matching subroutines.
Instead, we obtain our results by a sequence of \emph{offline} bipartite matching computations (more precisely, random sampling of matchings).
This yields a simple asymptotically-optimal online edge coloring algorithm
for the first (and so far only) adversarial arrival model for which positive results are known: one-sided vertex arrivals in bipartite graphs \cite{cohen2019tight}.
Specifically, we prove the following.

  \begin{theorem}[See \Cref{thm:alg}]  There exists an online edge-coloring algorithm  for the one-sided vertex arrival model with the following guarantee. On any $n$-node, maximum degree $\Delta$ bipartite graph, it computes a $(\Delta+q)$-edge-coloring with high probability,\footnote{By \emph{with high probability}, we mean probability of at least $1-n^{-c}$ for some constant $c>0$.} where  $q=O(\Delta^{2/3}\log^{1/3}n)$.
\end{theorem}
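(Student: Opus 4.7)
The plan is to process each arriving online vertex by solving a single offline bipartite matching in a ``color-availability graph'', thereby avoiding the complicated online matching subroutines of prior work. First I would reduce, via standard padding, to the case where $G$ is $\Delta$-regular and each arriving vertex brings exactly $\Delta$ edges. Using a palette of $k=\Delta+q$ colors, when vertex $v$ arrives with edges $e_1,\dots,e_\Delta$ to offline neighbors $u_1,\dots,u_\Delta$, I define $A_i\subseteq[k]$ to be the set of colors not yet used at $u_i$ and form the bipartite availability graph $H_v$ with one side $\{e_1,\dots,e_\Delta\}$, the other side $[k]$, and an edge between $e_i$ and each $c\in A_i$. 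The algorithm then computes a perfect matching in $H_v$ saturating $v$'s edges — sampled from a specific distribution described below — and assigns colors to $v$'s edges accordingly.

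\textbf{Hall's condition via a pseudorandom invariant.} To guarantee that a saturating matching in $H_v$ always exists, I would maintain the invariant that for every offline vertex $u$ the set of colors used so far at $u$ looks ``pseudorandom'': more precisely, for every subset $T\subseteq N(v)$ arriving next, the number of colors in $[k]$ that are simultaneously used at \emph{every} vertex of $T$ is close to the $k\cdot(\deg(u)/k)^{|T|}$ that would arise under uniform random color allocation. Under such an invariant with additive slack $O(q)$, Hall's condition on $H_v$ is satisfied for every subset $T$ of $v$'s edges and the required matching exists. I would sample this matching from a distribution whose one-step marginals — the probability that color $c$ gets assigned to edge $e_i$ — are close to uniform on $A_i$; concretely, via a randomized König-type edge-coloring of a regular padding of $H_v$, yielding a sampler that is both computable offline and has explicitly bounded marginal bias.

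\textbf{Concentration and main obstacle.} The technical heart of the proof is to show the pseudorandom invariant survives all $n$ arrivals with high probability. For a fixed offline vertex $u$, subset $T$, and color $c$, I would track the indicator that $c$ becomes used at every vertex of $T$ as a bounded martingale whose randomness is the matching sampler's choices over time; its per-step variance is controlled by the marginal probabilities dictated by the sampler. A Freedman-type inequality together with a union bound over $u$, over choices of $T$ (whose nontrivial sizes lie in $[1,q]$), over colors $c$, and over rounds, yields a deviation tail of the form $\exp(-\Omega(q^3/\Delta^2))$, which, setting this to $1/\mathrm{poly}(n)$, balances precisely to $q=O(\Delta^{2/3}\log^{1/3}n)$. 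The main obstacle I anticipate is twofold: first, designing the matching sampler so that its one-step marginals on $H_v$ are genuinely near-uniform on each $A_i$, given that uniform sampling from all perfect matchings of $H_v$ is $\#$P-hard, so a concrete constructive sampler with an explicit marginal-bias bound is needed; second, handling the correlations the sampler introduces between colors at the same offline vertex, which must be shown not to accumulate across rounds — this is where the variance-aware Freedman bound (rather than plain Azuma–Hoeffding) is essential to reach the claimed $\Delta^{2/3}$ rather than a weaker $\sqrt{\Delta}$ bound.
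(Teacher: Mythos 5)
Your high-level plan shares the paper's starting point (a per-arrival offline matching in a color-availability graph, palette $\Delta+q$, and a tail of the form $\exp(-\Omega(q^3/\Delta^2))$ balancing to $q=O(\Delta^{2/3}\log^{1/3}n)$), but the mechanism you sketch has a genuine gap exactly where you flag your ``main obstacle,'' and the missing idea is what makes the paper's proof work. You do not need a saturating perfect matching, Hall's condition, or a sampler with ``near-uniform'' marginals whose bias must be controlled. Instead, prescribe the marginals \emph{exactly}: set $x^t_{cv}=1/(\Delta-d_t(v)+q)$ for every color $c$ still available at $v$. Then $\sum_c x^t_{cv}=1$ automatically for each neighbor $v$, so the only condition for $\vec{x}^t$ to lie in the bipartite matching polytope of $H_t$ is the single linear constraint $\sum_v x^t_{cv}\le 1$ per color, and any off-the-shelf dependent-rounding scheme (e.g.\ \cite{gandhi2006dependent}) samples a matching attaining these marginals exactly. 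This sidesteps the \#P-hardness of uniform perfect-matching sampling, eliminates marginal bias entirely (so there is no bias accumulation over the up to $\Delta$ rounds in which an offline vertex participates), and replaces your subset-wise pseudorandom invariant by $\Delta+q$ scalar checks per arrival. Without a concrete sampler and an explicit bias bound, your invariant maintenance and your Freedman variance estimates have nothing to anchor to, so as written the argument does not go through.

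Two further points where your route would strain. First, in the Hall's-condition approach the \emph{nontrivial} subset sizes are those larger than $q$, not in $[1,q]$: every availability set $A_i$ already has at least $q$ colors, so small subsets satisfy Hall automatically, and the union bound you actually need ranges over $\binom{\Delta}{m}$ subsets for $m$ up to $\Delta$, forcing tail exponents growing like $m\log\Delta$ that your sketch does not deliver. Second, the concentration in the paper does not come from a martingale/Freedman analysis of the sampler's internal randomness, but from two invariants proved by a one-line induction over arrivals (\Cref{lem:invariants}): the exact marginal $\Pr[Z^t_{cv}=1]=(\Delta-d_t(v)+q)/(\Delta+q)$, and negative correlation of the $Z^t_{cv}$ across offline vertices for a fixed color --- the latter is free because within a matching the events ``color $c$ assigned to $v_i$'' are disjoint, so the conditional probability that $c$ survives at all of $v_1,\dots,v_k$ is $1-\sum_i\overline{x}^t_{cv_i}\le\prod_i(1-\overline{x}^t_{cv_i})$. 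A Chernoff bound for negatively correlated variables \cite{panconesi1997randomized} then gives the desired tail with only an $O(n\Delta)$-sized union bound (\Cref{lemma:chernoff}). Finally, note the paper's failure-mode device: rather than maintaining an invariant that deterministically guarantees feasibility, the algorithm falls back to independent random available colors when the load check fails, and one only shows this happens with probability $1/\mathrm{poly}(n)$; this decoupling keeps the induction on the marginals valid unconditionally and is much lighter than guaranteeing a saturating matching at every step.
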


The above theorem only gives non-trivial guarantees if $\Delta\geq q$ (i.e., when $\Delta=\Omega(\log n)$ is sufficiently large). Indeed, when $\Delta<q$, greedy already provides an edge coloring with $2\Delta-1\leq \Delta+q$ colors.

Our simple online $\Delta(1+o(1))$-edge-coloring algorithm improves on the $o(1)$ term of the algorithm of \cite{cohen2019tight}, which uses $\Delta+O(\Delta^{3/4}\log^{1/4} n)$ colors if $\Delta=\omega(\log n)$.
Moreover, our simpler algorithm nearly matches a lower bound of $\Delta+\Omega(\sqrt{\Delta})$ colors established in that prior paper. 
We leave the question of whether an algorithm (simple or otherwise) matching this lower bound's $o(1)$ terms exists as an open problem.

\section{Simple yet optimal online bipartite edge coloring}

\paragraph{Problem statement.} A bipartite graph of maximum degree $\Delta=\omega(\log n)$ is revealed.\footnote{As noted above, if $\Delta$ is smaller, the problem is solved optimally by the greedy algorithm.} Initially, only $n, \Delta$ and the nodes on the \emph{offline} side are known. At time $t$, the node $w_t$ on the \emph{online} side is revealed, together with its edges, which must be assigned colors immediately and irrevocably.
The objective is to compute a valid edge coloring using as few colors as possible.

\paragraph{Our algorithm.}
We attempt to provide a valid $(\Delta+q)$-edge-coloring, for $q=o(\Delta)$ to be chosen later.
In particular, we will color edges of each offline node $u$ with distinct colors, chosen uniformly at random from $\Col:=[\Delta+q]$.
To also color edges of each online node $w_t$ with distinct colors, we correlate the random choices at different offline nodes~as~follows.

At each time $t$ we consider a bipartite graph $H_t$ with one side given by the set of neighbors $N_G(w_t)$ of the arriving online node $w_t$ in $G$, and the other side being the set of colors $\Col$. The neighbor $v\in N_G(w_t)$ and color $c\in \Col$ are connected by an edge $cv\in H_t$ if and only if $v$ has no edge colored $c$.\footnote{We use the notation $cv\in H_t$ instead of the more standard but notationally cumbersome $\{c,v\} \in E(H_t)$.}
To color the edges incident to the arriving node $w_t$ in a valid manner, these edges must be given distinct colors and the color chosen for the edge $\{u,w_t\}$ must not already be used at the offline node $u$. These requirements correspond exactly to matchings in $H_t$.
We thus attempt to sample a matching $M_t$ in $H_t$ where each edge $\{u,w_t\}$ is assigned a uniformly random available color of neighbor $u$.
This can be achieved by a number of randomized rounding algorithms for the bipartite matching polytope, provided the desired marginal matching probabilities lie in this polytope. Fittingly, the crux of our analysis is show that the latter holds w.h.p. for $q=o(\Delta)$ sufficiently large.
For simplicity of analysis, we allow for a low-probability ``failure mode'' if this condition fails, in which case we still insist on coloring offline nodes with colors uniformly at random, but without  necessarily providing a valid edge coloring. 
Our pseudocode is given in \Cref{alg:one-sided-arrival}. 

\begin{center}
\begin{minipage}{0.95\textwidth}
\begin{mdframed}[hidealllines=true, backgroundcolor=gray!15]

\begin{algorithm} \ \\[0.2cm]
\emph{At the arrival of online node $w_t$:}
\begin{itemize}
    \item Let $H_t$ be a bipartite graph with node sets $N_G(w_t)$ and $\Col$, with $cv \in H_t$ iff $v$ has no edge colored $c$ (yet).
    \item For each $c \in \Col$ and $v \in V$, let $x^t_{cv} \gets \frac{\mathds{1}[cv \in H_t]}{\Delta-d_t(v)+q}$, for $d_t(v)$ the degree of $v$ by~time~$t$.
    \item If $\sum_v x^t_{cv}\leq 1$ for each color $c\in \Col$: sample matching $M_t$ in $H_t$ with marginals $\Pr[cv\in M_t]=x^t_{cv}$, and color each edge $\{v,w_t\}$ using the color $c$ that is matched to $v$ in $M_t$.
    \item Else (FAILURE MODE): color each edge $\{v,w_t\}$ with u.a.r.~color $c\in N_{H_t}(v)$.
\end{itemize}
\label{alg:one-sided-arrival}
\end{algorithm}
\end{mdframed}
\end{minipage}
\end{center}

\begin{observation}\label{obs:correctness}
By definition, we always have $\sum_{c} x^t_{cv} = 1$ for a vertex $v\in N_G(w_t)$. 
And so, if $\sum_v x^t_{cv}\leq 1$ for all colors $c\in \Col$, then the vector $\vec{x}^t$ is in the bipartite matching polytope (of $H_t$), and  a matching $M_t$ as above can be sampled efficiently (and simply, \cite{gandhi2006dependent}). 
In this case, all edges $\{v,w_t\}$ incident to $w_t$ get colored at time $t$ (since $\sum_{c} x^t_{cv} = 1$) and they all receive distinct colors from their endpoints' prior and other current edges (due to the definition of $H_t$ and $\sum_v x^t_{cv} \leq 1$).
\end{observation}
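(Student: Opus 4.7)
The plan is to unpack the observation into three essentially independent sub-claims---a row-sum identity, membership of $\vec{x}^t$ in the matching polytope, and validity of the resulting coloring---each of which follows directly from the algorithm's construction.

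\emph{Row sums.} First I would verify $\sum_c x^t_{cv} = 1$ for each $v \in N_G(w_t)$. A one-line induction on $t$ shows that the algorithm maintains the invariant that every offline node $v$ carries pairwise distinct colors on its already-colored edges: in both the non-failure and failure branches, the color assigned to a new edge $\{v, w_s\}$ is drawn from $N_{H_s}(v)$, which by definition of $H_s$ excludes every color already present at $v$. Hence the number of colors $c$ with $cv \in H_t$ equals $|\Col| - d_t(v) = \Delta + q - d_t(v)$, and each such $c$ contributes exactly $1/(\Delta + q - d_t(v))$ to $\sum_c x^t_{cv}$.

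\emph{Polytope membership and sampling.} Combining the row-sum identity with the hypothesized column-sum bounds $\sum_v x^t_{cv} \leq 1$ and the non-negativity $x^t_{cv} \geq 0$ shows that $\vec{x}^t$ satisfies all constraints of the fractional matching polytope of $H_t$. By K\"onig / Birkhoff--von Neumann, this polytope coincides with the convex hull of integral matchings of $H_t$, so $\vec{x}^t$ is a convex combination of matchings and can be rounded to an integral matching $M_t$ with the prescribed marginals $\Pr[cv \in M_t] = x^t_{cv}$ by any standard efficient procedure; the dependent rounding routine of \cite{gandhi2006dependent} cited in the statement suffices.

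\emph{Validity of the coloring.} Because $\sum_c \Pr[cv \in M_t] = \sum_c x^t_{cv} = 1$ and $v$ is matched to at most one color in any matching, each $v \in N_G(w_t)$ is matched by $M_t$ almost surely, so every edge $\{v, w_t\}$ receives a color. That color is new to $v$ since $cv \in H_t$ requires $v$ to have no prior edge colored $c$, and no two edges incident to $w_t$ share a color since $M_t$ is a matching and therefore assigns each color $c$ to at most one neighbor of $w_t$.

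The observation is essentially a consistency check for the algorithmic design, and none of the three pieces poses any real obstacle---each is a direct unpacking of a definition or a textbook fact about the bipartite matching polytope. The genuine difficulty the paper must tackle lies in the companion claim that the hypothesis $\sum_v x^t_{cv} \leq 1$ itself holds with high probability, so that the FAILURE MODE is essentially never triggered when $q = \Theta(\Delta^{2/3} \log^{1/3} n)$.
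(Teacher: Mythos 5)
Your proposal is correct and follows essentially the same reasoning the paper gives inline in the observation itself: row sums equal one by the definition of $H_t$ and $x^t_{cv}$, the column bound then places $\vec{x}^t$ in the (integral) bipartite matching polytope so that dependent rounding applies, and the matching structure yields a valid coloring of $w_t$'s edges. The only difference is that you spell out the small induction showing each offline node's prior edges carry distinct colors (in both branches), which the paper treats as immediate "by definition"; this is a faithful elaboration rather than a different route.
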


\paragraph{Analysis overview.} We wish to show that the condition $\sum_v x^t_{cv}\leq 1$ for all times $t$ and colors $c$, necessary to avoid the failure mode and output a valid edge coloring, occurs with high probability.
For this, we prove two invariants in \Cref{lem:invariants}: we prove 
\eqref{eq:uniform_coloring2} a closed form for $\E[x^t_{cv}]$, implying $\E[\sum_v x^t_{cv}] \leq 1-\Omega(q/\Delta)$. If for all $t$ and $c$ these $x^t_{cv}$ were independent, standard Chernoff bounds would suffice to show that w.h.p., $\sum_v x^t_{cv}$ does not deviate much from its expectation, and in particular is at most one. 
As these variables may be dependent, we also prove \eqref{eq:negative correlation2} negative correlation of the random variables $x^t_{cv}$, allowing us to apply Chernoff-like bounds to these dependent variables and prove that the desired condition holds w.h.p., in \Cref{lemma:chernoff}.

\begin{lemma}\label{lem:invariants}
    Let $Z_{cv}^t$ be the indicator variable for color $c$ \emph{not being used} by edges of $v$ when $w_t$ arrives. At any time $t$, the following invariants hold:
\begin{itemize}
    \item \textbf{(Marginals)} For any \emph{color} $c\in \calC$ and \emph{offline} node $v$, we have:
    \begin{equation} \label{eq:uniform_coloring2}
        \Pr[Z_{cv}^t = 1] = \frac{\Delta - d_t(v) + q}{\Delta + q}.
    \end{equation}
    \item \textbf{(Negative dependence)} For any \emph{color} $c\in \Col$ and 
 \emph{offline} nodes $v_1,\dots,v_k$, we have:
    \begin{equation} \label{eq:negative correlation2}
        \Pr\left[\bigwedge_{i\in [k]} (Z^t_{cv_i}=1)\right] \leq \prod_{i \in [k]} \Pr[Z^t_{cv_i}=1].
    \end{equation}
\end{itemize}
\end{lemma}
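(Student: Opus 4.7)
The plan is to prove both invariants simultaneously by induction on $t$. The base case $t=1$ is immediate: no edges are yet colored, so $Z_{cv}^1\equiv 1$ and $d_1(v)=0$, giving $\Pr[Z_{cv}^1=1]=1=\frac{\Delta-d_1(v)+q}{\Delta+q}$ and reducing~\eqref{eq:negative correlation2} to $1\leq 1$. For the inductive step I would write
\[
Z_{cv}^{t+1}=Z_{cv}^t\cdot Y_{cv}^t,
\]
where $Y_{cv}^t$ is the indicator that edge $\{v,w_t\}$ is \emph{not} assigned color $c$ at step $t$ (vacuously $Y_{cv}^t=1$ when $v\notin N_G(w_t)$), and let $\mathcal{F}_t$ denote the algorithm's history through the arrival of $w_t$ but \emph{before} its random color assignments at time $t$. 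Since the adversary is oblivious, $d_t(v)$ is a deterministic function of the input, and since $x^t$ is $\mathcal{F}_t$-measurable, so is the failure-mode flag.

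For the marginal invariant~\eqref{eq:uniform_coloring2}, the key observation is that, conditional on $\{Z_{cv}^t=1\}$ and $\mathcal{F}_t$ with $v\in N_G(w_t)$, the probability that edge $\{v,w_t\}$ receives color $c$ equals $\frac{1}{\Delta-d_t(v)+q}$ in \emph{both} modes: in non-failure mode by the prescribed marginals of the sampled matching $M_t$, and in failure mode because $v$ has exactly $\Delta-d_t(v)+q$ available colors from which it draws uniformly. A short computation then combines this with the inductive hypothesis to yield $\Pr[Z_{cv}^{t+1}=1]=\Pr[Z_{cv}^t=1]\cdot\frac{\Delta-d_t(v)+q-1}{\Delta-d_t(v)+q}=\frac{\Delta-d_{t+1}(v)+q}{\Delta+q}$.

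For the negative-dependence invariant~\eqref{eq:negative correlation2}, the crux is the auxiliary claim that, conditional on $\mathcal{F}_t$, the family $\{Y_{cv_i}^t\}_{v_i\in N_G(w_t)}$ is negatively correlated. I would establish this by a case split on the $\mathcal{F}_t$-measurable failure flag: in non-failure mode it is exactly the ``negative correlation on each vertex'' property of the GKPS dependent-rounding scheme of~\cite{gandhi2006dependent}, applied to the edges of $H_t$ incident to the common color $c$; in failure mode the $Y_{cv_i}^t$'s are mutually independent, since each edge incident to $w_t$ is colored independently of the others. The remainder of the proof factors
\[
\Pr\left[\bigwedge_i Z_{cv_i}^{t+1}=1\right]=\E\left[\mathds{1}\left[\bigwedge_i Z_{cv_i}^t=1\right]\cdot\Pr\left[\bigwedge_i Y_{cv_i}^t=1\,\Big|\,\mathcal{F}_t\right]\right],
\]
bounds the inner conditional probability by $\prod_i\Pr[Y_{cv_i}^t=1\mid\mathcal{F}_t]$, pulls this \emph{deterministic} product (a function of the $d_t(v_i)$'s alone) out of the outer expectation, and closes the argument by applying the inductive hypothesis to $\Pr[\bigwedge_i Z_{cv_i}^t=1]$; telescoping recovers $\prod_i\Pr[Z_{cv_i}^{t+1}=1]$.

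I expect the main obstacle to be pinning down the correct variant of the GKPS guarantee: one needs negative correlation of the ``not-matched'' events for a subset of edges sharing the common \emph{color} $c$ in $H_t$, rather than the more commonly cited version for edges sharing an offline vertex $v$. Once that is in hand, the failure-mode case is essentially trivial, and the remainder is bookkeeping to compose the conditional negative correlation at step $t$ with the inductive hypothesis from earlier steps.
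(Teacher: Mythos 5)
Your proof is correct in substance, and the marginal invariant is handled exactly as in the paper (the uniform choice in failure mode and the prescribed marginal in non-failure mode both give the conditional probability $\frac{1}{\Delta-d_t(v)+q}$, which composes with the inductive hypothesis). Where you genuinely diverge is the negative-dependence step. You import the per-vertex negative-correlation guarantee of the GKPS dependent-rounding scheme, applied at the color vertex $c$ of $H_t$; your worry about ``the correct variant'' is unfounded, since that guarantee is stated for every vertex of the bipartite graph, colors included, so your conditional claim $\Pr[\bigwedge_i Y^t_{cv_i}=1\mid\mathcal{F}_t]\leq\prod_i\Pr[Y^t_{cv_i}=1\mid\mathcal{F}_t]$ does hold in non-failure mode, and the failure-mode case is independence as you say. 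The paper instead uses a more elementary observation that needs no property of the sampler beyond its output being a matching with the prescribed marginals: since color $c$ can be matched to at most one vertex of $M_t$, the events $\{cv_i\in M_t\}$ are mutually exclusive, so conditioned on a history with $\bigwedge_i Z^t_{cv_i}=1$ the probability that none of the $v_i$ receives $c$ is exactly $1-\sum_i \overline{x}^t_{cv_i}$, which is at most $\prod_i(1-\overline{x}^t_{cv_i})$ by the union bound. The practical difference: Algorithm~\ref{alg:one-sided-arrival} only specifies the marginals of $M_t$, so the paper's argument covers any implementation of the sampling step, whereas yours silently fixes the implementation to GKPS (or any rounding with per-vertex negative correlation); since the paper indeed points to \cite{gandhi2006dependent} as the sampler, your proof is valid for that instantiation, but it proves a slightly narrower statement and invokes heavier machinery than the mutual-exclusivity argument requires. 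Your bookkeeping for composing the step-$t$ bound with the inductive hypothesis (restricting to the event $\bigwedge_i Z^t_{cv_i}=1$ so that the conditional product is the deterministic quantity $\prod_i(1-\overline{x}^t_{cv_i})$) matches the paper's total-probability-over-histories argument.
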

\begin{proof}
     We prove both invariants by induction on $t\geq 1$. The base case $t=1$ trivially holds for both.
     To prove both inductive steps, we first note that $\mathds{1}[cv \in H_t] = Z^t_{cv} \cdot \mathds{1}[v \in N_G(w^t)]$. So, the value of the random variable $x^t_{cv}$ conditioned on any history up to time $t$ implying  $Z^t_{cv} = 1$ is precisely $\overline{x}^t_{cv} := \frac{\mathds{1}[v \in N_G(w^t)]}{\Delta - d_t(v) + q}$. In particular, conditioning on any such history,  the color $c$ is used for edge $\{v,w_t\}$ with probability precisely $\overline{x}^t_{cv}$ (also in the failure mode, and also if $v\not\in N_G(w_t)$).
     
     The first invariant's inductive step then follows from the above observation and the inductive hypothesis,by a routine calculation, as follows:
     \begin{align}\label{eqn:coloring-step-one-sided}
     \Pr[Z^{t+1}_{cv}=1] & = (1-\overline{x}^t_{cv}) \cdot \Pr[Z^t_{cv}=1] \\
     & = \left(1-\frac{\mathds{1}[v \in N_G(w^t)]}{\Delta - d_t(v) + q}\right)\cdot \frac{\Delta-d_{t}(v)+q}{\Delta+q} \nonumber \\
     & = \frac{\Delta-d_{t+1}(v)+q}{\Delta+q}.\nonumber
     \end{align}
    
     For the second invariant's inductive step, we claim that for any history $\mathcal{H}$ up to time $t$ that implies $\bigwedge_{i\in [k]} (Z^t_{cv_i}=1)$, we have that $\Pr\left[\bigwedge_{i\in [k]} (Z^{t+1}_{cv_i}=1) \; \middle\vert\; \mathcal{H}\right] \leq \prod_{i\in [k]}(1-\overline{x}^t_{cv}).$
     This inequality is clearly an equality for the failure mode, where colors are assigned independently; otherwise, the LHS equals  $1-\sum_{i\in [k]} \overline{x}^t_{cv_i}$, which is upper bounded by the RHS, where this standard inequality follows from the union bound.
     Therefore, by total probability over histories $\mathcal{H}$ as above and the inductive hypothesis and \Cref{eqn:coloring-step-one-sided}, we obtain the claimed statement:
     \begin{align*}
     \Pr\left[\bigwedge_{i\in [k]} (Z^{t+1}_{cv_i}=1)\right] & = \Pr\left[\bigwedge_{i\in [k]} (Z^{t+1}_{cv_i}=1) \;\middle\vert\; \bigwedge_{i\in [k]} (Z^t_{cv_i}=1)\right] \cdot \Pr\left[\bigwedge_{i\in [k]} (Z^t_{cv_i}=1)\right] \\
     & \leq \prod_{i\in [k]}(1-\overline{x}^t_{cv}) \cdot \prod_{i\in [k]} \Pr[Z^t_{cv_i}=1] \\
     & = \prod_{i\in [k]} \Pr[Z^{t+1}_{cv_i}=1].\qedhere
     \end{align*}
\end{proof}

Using these invariants, we now show that \Cref{alg:one-sided-arrival} is unlikely to enter the failure mode.
\begin{lemma} \label{lemma:chernoff}
  If $q=3\Delta^{2/3}\log^{1/3}n\leq \Delta$, then with high probability,
    for each time $t$ and color~$c\in \Col$
    $$\sum_v x^t_{cv}\leq 1.$$
\end{lemma}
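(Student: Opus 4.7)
The plan is to fix a time $t$ and color $c \in \Col$, bound $\Pr[S_c^t \geq 1]$ (where $S_c^t := \sum_v x^t_{cv}$) by $n^{-\Omega(1)}$, and then union-bound over the $O(n^2)$ such pairs. Set $d := |N_G(w_t)|$. A deterministic sanity check disposes of the easy regime: every $v \in N_G(w_t)$ still has the edge $\{v,w_t\}$ uncolored, hence $d_t(v) \leq \Delta-1$ and $x^t_{cv} \leq 1/(q+1)$ almost surely; so if $d \leq q$ then $S_c^t < 1$ deterministically, and I may assume $d > q$. The marginals invariant~\eqref{eq:uniform_coloring2} pins down the expectation: since $x^t_{cv} = Z^t_{cv}/(\Delta - d_t(v) + q)$ when $v \in N_G(w_t)$, one gets $\E[x^t_{cv}] = 1/(\Delta+q)$ and hence $\mu := \E[S_c^t] = d/(\Delta+q)$, leaving a gap $1-\mu \geq q/(\Delta+q) = \Omega(q/\Delta)$ to be closed by concentration.

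For concentration, view $S_c^t$ as a positively-weighted sum $\sum_v a_v Z^t_{cv}$ of indicators, with deterministic weights $a_v := \mathds{1}[v \in N_G(w_t)]/(\Delta - d_t(v) + q) \in [0, 1/(q+1)]$. The negative-dependence invariant~\eqref{eq:negative correlation2} is precisely $\E[\prod_{v \in A} Z^t_{cv}] \leq \prod_{v \in A} \E[Z^t_{cv}]$ for every $A$. Expanding $\exp(\lambda a_v Z^t_{cv}) = 1 + (e^{\lambda a_v} - 1) Z^t_{cv}$ monomial-by-monomial and applying this inequality yields the factorised upper bound $\E[\exp(\lambda S_c^t)] \leq \prod_v \E[\exp(\lambda a_v Z^t_{cv})]$, which is the only property needed for any Chernoff/Bernstein-type estimate. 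Since $a_v \leq 1/(q+1)$ and $\Var(a_v Z^t_{cv}) \leq a_v \cdot \E[x^t_{cv}] = a_v/(\Delta+q)$, the total variance satisfies $\sigma^2 \leq \mu/(q+1) \leq 1/(q+1)$. A Bernstein bound then gives
\[
\Pr[S_c^t \geq 1] \;\leq\; \exp\!\left(-\frac{(1-\mu)^2}{2\sigma^2 + \tfrac{2(1-\mu)}{3(q+1)}}\right) \;=\; \exp\!\left(-\Omega(q^3/\Delta^2)\right).
\]
Plugging in $q = 3\Delta^{2/3}\log^{1/3} n$ gives $q^3/\Delta^2 = 27 \log n$, so each probability is at most $n^{-\Omega(1)}$ with a constant in the exponent large enough to absorb the $O(n^2)$ union bound.

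The main obstacle is choosing a concentration inequality sharp enough for the target $q$. A naive Hoeffding bound on summands in $[0, 1/(q+1)]$ only delivers $\exp(-\Omega(q^4/\Delta^3))$ in the exponent, which forces $q = \Omega(\Delta^{3/4}\log^{1/4} n)$ and merely reproduces the bound of \cite{cohen2019tight}. The improvement to $\Delta^{2/3}\log^{1/3} n$ hinges on the fact that the true variance $\sigma^2 = O(1/q)$ is smaller than the Hoeffding-style worst case $O(d/q^2)$ by a factor of $d/q$, which Bernstein (or Bennett) extracts. The only other delicacy is verifying that the one-sided statement~\eqref{eq:negative correlation2}, while formally weaker than full negative association, already suffices for this exponential-moment factorization; that is exactly what the elementary indicator expansion outlined above accomplishes.
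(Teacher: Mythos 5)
Your proposal is correct and follows essentially the same route as the paper: compute $\E\bigl[\sum_v x^t_{cv}\bigr] = |N_G(w_t)|/(\Delta+q)$ from the marginals invariant \eqref{eq:uniform_coloring2}, use the negative-dependence invariant \eqref{eq:negative correlation2} to justify a Chernoff/Bernstein-type tail bound with exponent $\Omega(q^3/\Delta^2)=\Omega(\log n)$, and union bound over the $O(n^2)$ pairs $(t,c)$. The only cosmetic difference is that you re-derive the moment-generating-function factorization via the indicator expansion and then apply Bernstein, whereas the paper rescales the summands by $q$ to lie in $[0,1]$ and cites the Chernoff bound for $1$-correlated variables of \cite{panconesi1997randomized}; the quantitative mechanism---exploiting that the variance (equivalently, the mean of the rescaled variables), rather than the worst-case range, controls the deviation---is identical in both arguments.
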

\begin{proof}
Fix a time $t$ and color $c$. Notice that $\mathds{1}[cv \in H_t] = Z^t_{cv} \cdot \mathds{1}[v \in N_G(w^t)]$,  and 
hence $x^t_{cv} =  \frac{Z^t_{cv}}{\Delta - d_t(v) + q}$ for all $v\in N_G(w_t)$ (and $x^t_{cv}=0$ for all $v\notin N_G(w_t)$).
For all $v\in N_G(w_t)$, define the random variables $Y_v:=q\cdot x^t_{cv}=\frac{q}{\Delta-d_t(v)+q}\cdot Z^t_{cv}$. It suffices to prove that $\sum_{v} Y_v\leq q$ with high probability.
This follows from a variant of Chernoff bounds, as follows.

First, by Invariant \eqref{eq:negative correlation2}, because $Y_v \neq 0$ if and only if $Z^t_{cv} = 1$, we have that:
\begin{equation*}
    \Pr\left[ \bigwedge_v (Y_v \neq 0) \right] \leq \prod_{v} \Pr\left[ Y_v \neq 0 \right].
\end{equation*}
For such weighted binary variables $Y_v \in \{0, \frac{q}{\Delta-d_t(v)+q}\}$, the above is  equivalent to the definition of $1$-correlation in the sense of \cite[Definition 3.1]{panconesi1997randomized}, namely $\E[\prod_{v\in U} Y_v]\leq \prod_{v\in U}\E[Y_v]$ for all $U\subseteq N_G(w^t)$. As shown in \cite{panconesi1997randomized}, this suffices to upper bound the moment-generating function of $\sum_v Y_v$ and derive strong tail bounds. In particular, by \cite[Corollary 3.3]{panconesi1997randomized}, since we also have that $Y_v \in [0,1]$ for all $v$, the following Chernoff bound holds for any $\varepsilon > 0$:
\begin{equation} \label{eq:chernoff_type_ineq}
    \Pr\left[ \sum_{v} Y_v \geq (1 + \varepsilon) \cdot \E\left[\sum_{v} Y_v\right]\right]  \leq \exp\left( -\frac{\varepsilon^2 \cdot \E\left[\sum_{v} Y_v\right]}{2 + \varepsilon} \right).
\end{equation}

Next, by Invariant \eqref{eq:uniform_coloring2}, $\E[Y_v]=\frac{q}{\Delta+q}$ 
for each node $v\in N_G(w_t)$. Hence, $\E[\sum_{v} Y_v] = \frac{kq}{\Delta + q}$, where $k := |N_G(w_t)| \leq \Delta$.
By setting $\varepsilon := \frac{\Delta + q - k}{k}$ in the Chernoff bound \eqref{eq:chernoff_type_ineq} we obtain:
\begin{align*}
    \Pr \left[ \sum_{v \in N_G(w^t)} Y_v \geq q \right] &= 
    \Pr \left[ \sum_{v \in N_G(w^t)} Y_v \geq \left(1 + \frac{\Delta + q - k}{k} \right) \cdot \frac{kq}{\Delta + q} \right] & \\ 
    &\leq \exp{\left( - \frac{(\Delta + q - k)^2}{k^2} \cdot \frac{kq}{\Delta + q} \cdot \frac{k}{\Delta + q + k} \right) } & \\
    &\leq \exp{ \left(-\frac{q^3}{2\Delta^2 + 3 \Delta q + q^2} \right)} \\
    & \leq \exp{ \left( -\frac{q^3}{6\Delta^2} \right)}.
\end{align*}
Above, the second-to-last inequality follows because $\left( - \frac{(\Delta + q - k)^2 \cdot q}{(\Delta + q)(\Delta + q + k)} \right)$ is decreasing in $k \leq \Delta$, and the last inequality relies on $q\leq \Delta$ by the lemma's hypothesis. 
Thus, for our choice of $q$, $$\Pr\left[\sum_v x^t_{cv}\geq 1\right] = \Pr\left[\sum_v Y_v\geq q\right]\leq \frac{1}{n^{4.5}} \leq \frac{1}{2n^{3}}.$$
The lemma then follows by union bounding over all $n$ online nodes and at most $2n$ colors.
\end{proof}

Combining \Cref{obs:correctness} and \Cref{lemma:chernoff}, we obtain our result.
\begin{theorem}
\label{thm:alg}    \Cref{alg:one-sided-arrival} with $q=3\Delta^{2/3}\log^{1/3}n \leq \Delta$ (i.e., if $\Delta\geq 81\log n$) computes a $(\Delta+q)$-edge-coloring of any $n$-node, maximum degree $\Delta$ bipartite graph with high probability.
\end{theorem}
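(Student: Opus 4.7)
The plan is straightforward: this theorem is essentially a bookkeeping corollary that combines the two preceding results (\Cref{obs:correctness} and \Cref{lemma:chernoff}) via a simple union bound / conditioning argument. Almost all the real work has already been done in the proof of \Cref{lem:invariants} (marginals and negative dependence of the $Z^t_{cv}$'s) and in \Cref{lemma:chernoff} (the Chernoff-style concentration for negatively correlated $\{0,1\}$-weighted variables). The theorem is just the final wrap-up.

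Concretely, I would first invoke \Cref{lemma:chernoff} with the specified choice $q = 3\Delta^{2/3}\log^{1/3} n$, which by hypothesis satisfies $q \leq \Delta$. This gives, with high probability, the event
\[
\mathcal{E} := \Bigl\{\, \textstyle\sum_v x^t_{cv}\leq 1 \text{ for every time } t \text{ and every color } c\in \Col \,\Bigr\}.
\]
Then I would condition on $\mathcal{E}$ and apply \Cref{obs:correctness} at each arrival time $t$. On $\mathcal{E}$ the algorithm never enters the failure mode, so at every time $t$ it samples a matching $M_t$ in $H_t$ with marginals $\Pr[cv \in M_t] = x^t_{cv}$. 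The observation then guarantees both that each edge $\{v,w_t\}$ is actually colored (since $\sum_c x^t_{cv} = 1$ forces every $v \in N_G(w_t)$ to be matched in $M_t$) and that the assigned colors are distinct at $w_t$ and absent from each neighbor's prior edges (from the construction of $H_t$ together with the matching constraint $\sum_v x^t_{cv} \leq 1$).

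Iterating over all $n$ arrivals on the event $\mathcal{E}$ yields a valid edge coloring of the whole graph using only colors from $\Col = [\Delta+q]$. The theorem follows.

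There is no real obstacle at this step: the concentration, the bound $q \leq \Delta$ required for the last inequality in the Chernoff computation, and the polytope / sampling argument are all already established. One only needs to note that the high-probability guarantee in \Cref{lemma:chernoff} is exactly the premise of \Cref{obs:correctness} being satisfied at every time $t$ simultaneously, and that the algorithm uses at most $\Delta + q$ colors by construction.
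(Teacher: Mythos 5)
Your proposal is correct and matches the paper's proof, which likewise just invokes \Cref{lemma:chernoff} to get the event $\sum_v x^t_{cv}\leq 1$ for all $t$ and $c$ with high probability, and then applies \Cref{obs:correctness} to conclude a valid $(\Delta+q)$-edge-coloring. The extra detail you give about conditioning on the good event and iterating over arrivals is exactly the (implicit) content of the paper's one-line argument.
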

\begin{proof}
    By \Cref{lemma:chernoff}, the condition $\sum_{v}x^t_{cv}\leq 1$ holds for all time $t$ and colors $c$ with high probability, which by \Cref{obs:correctness} results in a valid edge coloring using $\Delta+q$ colors.
\end{proof}

\begin{remark}
In \Cref{sec:tight-example}, using standard anti-concentration bounds, we also show that our analysis is tight, i.e., that \Cref{alg:one-sided-arrival} indeed requires $\Delta + \Omega(\Delta^{2/3}\log^{1/3}{n})$ colors to work.
\end{remark}

\paragraph{Acknowledgements.} This work was supported by the Swiss National Science Foundation project 200021-184656 ``Randomness in Problem Instances and Randomized Algorithms'' and by the Swiss State Secretariat for Education, Research and Innovation (SERI) under contract number MB22.00054. Joakim Blikstad is partially supported by the Swedish Research Council (Reg. No. 2019-05622) and the Google PhD Fellowship Program. David Wajc is supported by a Taub Family Foundation ``Leader in Science and Technology'' fellowship.

\bibliographystyle{alpha}
\bibliography{abb,ultimate}

\appendix
\section*{APPENDIX}
\section{Tight example for our algorithm}\label{sec:tight-example}

In the following we show that the bound of $(\Delta+O(\Delta^{2/3}\log^{1/3}n))$ colors is tight for \Cref{alg:one-sided-arrival}, for a wide range of $\Delta$ superlogarithmic (and even polynomial) in $n$.

\begin{lemma}
    For any constant $r\geq 3$, there exists an infinite family of instances with $n$ nodes and maximum degree $\Delta = \Theta(n^{1/r})$, on which \Cref{alg:one-sided-arrival} run with $q = \frac{1}{6r^{1/3}} \cdot \Delta^{2/3}\log^{1/3} n$ fails to output a valid edge coloring with constant probability.
\end{lemma}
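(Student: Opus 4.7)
The plan is to amplify a per-gadget failure probability into global constant failure by taking many disjoint copies. Specifically, I take $N := \lfloor n/(\Delta^2+1)\rfloor$ disjoint copies of a ``star-shaped'' gadget $G_0$, padded with isolated vertices to reach $n$ nodes total; because the copies share no edges, \Cref{alg:one-sided-arrival} runs on them with independent coins. The gadget $G_0$ has offline side $\{v_1, \dots, v_\Delta\}$, one ``central'' online node $w^*$ adjacent to all $v_i$, and $\Delta(\Delta-1)$ ``probe'' online nodes of degree $1$ partitioned so that each $v_i$ is privately adjacent to exactly $\Delta-1$ probes. Arrivals process all probes first and $w^*$ last; the max degree is $\Delta$, each gadget has $\Delta^2 + 1$ vertices, and for $r \geq 3$ this gives $N = \Theta(n^{1-2/r})$ copies.

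Within one gadget: because each probe touches only one offline node, the color set $S_{v_i}$ that $v_i$ accumulates after its $\Delta-1$ probes is a uniform random $(\Delta-1)$-subset of $\Col$, and the $S_{v_i}$'s for distinct $i$ are \emph{independent} (each probe's matching involves only its unique offline endpoint). When $w^*$ arrives, $d_{t^*}(v_i) = \Delta-1$ for every $v_i$, so $\Delta - d_{t^*}(v_i) + q = q+1$, and the failure condition for color $c$ reduces to $|V_c| := |\{i : c \notin S_{v_i}\}| > q+1$. By independence, $|V_c| \sim \mathrm{Binomial}(\Delta, (q+1)/(\Delta+q))$, with mean $\Theta(q^2/\Delta)$ below the threshold $q+1$ and standard deviation $\Theta(\sqrt{q})$. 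For $q = \frac{1}{6r^{1/3}}\Delta^{2/3}\log^{1/3}n$, the gap evaluates to $\Theta(\sqrt{\log \Delta})$ standard deviations, so a standard binomial lower-tail estimate (equivalently, Berry--Esseen plus a Gaussian tail lower bound) yields $\Pr[|V_c| > q+1] \geq \exp(-\Theta(q^3/\Delta^2)) = n^{-\Theta(1/r)}$ for any fixed color $c$.

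Aggregating over the $N$ independent copies (fixing a single color $c$), the probability that no copy triggers failure is at most $(1 - n^{-\Theta(1/r)})^N \leq \exp(-n^{1-O(1/r)})$, which for $r \geq 3$ tends to $0$; hence \Cref{alg:one-sided-arrival} fails with probability $1-o(1)$, in particular with constant probability. \textbf{The main obstacle} is establishing the binomial anti-concentration bound with the right polynomial rate: although the gadget's independence makes $|V_c|$ exactly Binomial (sidestepping the correlations from Invariant~\eqref{eq:negative correlation2}), one must check that the Gaussian approximation at $\Theta(\sqrt{\log \Delta})$ standard deviations dominates the Berry--Esseen error. The star-shape of $G_0$ is essential: on a tightly-coupled graph like $K_{\Delta,\Delta}$, the matching-polytope column-sum constraints rigidly force each color to be used ``on schedule,'' collapsing the variance of the relevant sum and preventing any failure regardless of $q$---so the construction must decouple the offline nodes' color histories, which is exactly what the private probes do.
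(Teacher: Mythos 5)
Your construction and the anti-concentration step essentially reproduce the paper's argument: the same probe-padded star gadget (offline nodes of degree $\Delta-1$ in disjoint subgraphs, then a final online node of degree $\Delta$), the same reduction of the bad event for a fixed color $c$ to a $\mathrm{Bin}\bigl(\Delta,\tfrac{q+1}{\Delta+q}\bigr)$ upper tail at threshold $q+1$, and the same amplification over polynomially many independent gadgets. Where the paper invokes a packaged converse-Chernoff bound ($\Pr[X\geq(1+\eps)\E[X]]\geq\exp(-9\eps^2\E[X])$), you propose Berry--Esseen plus a Gaussian tail; that route does work here, since the Berry--Esseen error $O(1/\sqrt{q})=O(n^{-1/(3r)+o(1)})$ is polynomially smaller than the target tail $n^{-\Theta(1/(r))}$ (whose exponent constant is tiny, of order $1/(432r)$), though you only flag this check rather than carry it out.

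The genuine gap is at the very end: you equate ``some gadget triggers the failure mode'' with ``the algorithm fails to output a valid edge coloring,'' and these are not the same event. In the failure mode the algorithm still colors every edge $\{v,w_t\}$ with a uniformly random color from $N_{H_t}(v)$, independently, and this random assignment may well happen to be proper; the lemma, however, asserts an \emph{invalid output} with constant probability. The paper closes this with a separate argument: conditioned on a witness color $c$ at time $t$, there are at least $q+2$ neighbors of $w_t$, each choosing uniformly among exactly $q+1$ available colors that all contain $c$, so the probability that $c$ is assigned to two or more edges of $w_t$ is at least $\Pr[\mathrm{Bin}(q+2,\tfrac{1}{q+1})\geq 2]\geq 1-2/e$, giving overall failure probability at least $(1-1/e)(1-2/e)$. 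Without this (or some other birthday-type collision argument inside triggered gadgets), your conclusion --- and in particular the claimed $1-o(1)$ failure probability --- does not follow from what you proved; with it, your many-independent-gadgets setup would in fact recover even the stronger $1-o(1)$ bound, since each triggered gadget collides independently with constant probability.
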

\begin{proof}
     For all (sufficiently large) integer  $k$, we let $\Delta:=k-1$ and construct an instance graph with $n := \max\{k,\lfloor k^{r-2} \rfloor\}\cdot (\Delta^2 + 1)\leq k^{r-2} \cdot (\Delta^2 + 1)\leq k^r$ many nodes.
     In the instance, $\Delta = k-1 =\Theta(n^{1/r})$ is the maximum degree of any node in the instance.
     We turn to describing this instance.

     A \emph{gadget} consists of an online node $w_t$ connected to $\Delta$ offline neighbors of degree $\Delta-1$ (before $w_t$ arrives), each of these belonging to disjoint subgraphs. Hence, for any color $c$ these offline nodes are neighbors of $c$ in $H_t$ independently. Our instance consists of $\max\{k,\lfloor k^{r-2} \rfloor\}\geq k$ disjoint (hence independent) such gadgets, each having $\Delta^2 + 1$ nodes and therefore totaling $n$ nodes.

    We now fix the gadget corresponding to some $w_t$. Since the $\Delta$ neighbors $v$ of $w_t$ neighbor $c$ independently in $H_t$, each with probability 
    $\frac{\Delta-d_t(v)+q}{\Delta+q}=\frac{q+1}{\Delta+q}$
    (by Invariant \eqref{eq:uniform_coloring2}), the number of neighbors of $c$ in $H_t$ is distributed as $X=\sum_{v\in N_G(w_t)}Z^t_{cv} \sim \textrm{Bin}(\Delta,\frac{q+1}{\Delta+q})$.
    Since all neighbors $v$ of $c$ in $H_t$ have $d_t(v)=\Delta-1$ and hence $x_{cv} = \frac{1}{q+1}\cdot Z^{t}_{cv}$, \Cref{alg:one-sided-arrival} does not enter failure mode if and only if $|X|\leq q+1$.
    We thus wish to lower bound \begin{align}\label{eqn:making-eps-explicit}
    \Pr[X > q+1] = \Pr[X >(1+q/\Delta)\cdot \E[X]] \geq \Pr[X \geq (1+2q/\Delta)\cdot \E[X]].
    \end{align}
    Let $\eps:=2q/\Delta$. 
    By \cite[Lemma 4]{klein2015number}, for $\varepsilon < 1/2$ such that $\varepsilon^2 \cdot \E[X] \geq 3$ (as we shortly verify is the case here), we have the following asymptotic converse of Chernoff's bound:
    \begin{equation} \label{eq:anticoncentration}
        \Pr\left[X \geq (1+\varepsilon) \cdot \E[X]  \right] \geq \exp{ \left( - 9 \varepsilon^2 \cdot \E[X]  \right)}.
    \end{equation}
To see that the required conditions for applying this inequality hold, 
    first notice that $\varepsilon = \frac{2q}{\Delta} = O(\sqrt[3]{\log n/\Delta})$, and so $\varepsilon<1/2$ for sufficiently large $k$ (and hence for sufficietly large $\Delta=\Theta(n^{1/r})>>\log n$).
    On the other hand, we have that for large enough $k$ (and hence $n$):
    \begin{equation*}
        \varepsilon^2 \cdot \E[X]  = \frac{(2q)^2}{\Delta^2}  \cdot \Delta \cdot \frac{q+1}{\Delta + q} \geq \frac{4q^2}{\Delta^2}  \cdot \Delta \cdot \frac{q}{2\Delta}=  \frac{2q^3}{\Delta^2} = \frac{1}{108 r} \cdot \log n \geq 3.
    \end{equation*}
    Similarly, using that $n\leq k^r$, we have:
    \begin{equation}\label{eqn:ub-chernoff-exponent}
        \varepsilon^2 \cdot \E[X] = \frac{(2q)^2}{\Delta^2}  \cdot \Delta \cdot \frac{q+1}{\Delta + q} \leq \frac{4q^2}{\Delta^2}  \cdot \Delta \cdot \frac{2q}{\Delta} = \frac{8q^3}{\Delta^2} = \frac{\log n}{27r} \leq \frac{\log k}{9}.    \end{equation}
    Combining the above, we obtain:
    \begin{equation*}
        \Pr[X > q + 1] \stackrel{\eqref{eqn:making-eps-explicit}}{\geq}\Pr[X\geq (1+\eps)\cdot \E[X]] \stackrel{\eqref{eq:anticoncentration}}{\geq} \exp{ \left( - 9 \varepsilon^2 \cdot \E[X] \right)} \stackrel{\eqref{eqn:ub-chernoff-exponent}}{\geq} \exp\left(-9 \cdot \frac{\log k}{9} \right) = \frac{1}{k}.
    \end{equation*}
    Hence, \Cref{alg:one-sided-arrival} enters failure mode on any fixed gadget with probability at least $\frac{1}{k}$. As the instance consists of $\max\{k,\lfloor k^{r-2} \rfloor\}\geq k$ many independent gadgets, the probability that the algorithm does \emph{not enter} failure mode on \emph{any} of them is upper bounded by a constant,
    $\left( 1 - \frac{1}{k} \right)^{k} \leq 1/e$, or put otherwise $\Pr[\textrm{enter failure mode}]\geq 1-1/e$.

    Now, condition on \Cref{alg:one-sided-arrival} entering failure mode, and fix some time $t$ and color $c\in \Col$ for which  $\sum_{v} x^t_{cv}>1$ (i.e., this is a witness for the algorithm entering failure mode). 
    Then, by the preceding discussion, at least $q+2$ neighbors $v$ of $w_t$ in $G$ are neighbors of $c$ in $H_t$, where they all have have degree $\Delta-d_t(v)+1=q+1$.
    Therefore, by the independent coloring in the failure mode, the probability that the algorithm fails in outputting a valid edge coloring since it assigns $c$ to two or more edges of $w_t$ is at least 
    \begin{align*}
    \Pr[\textrm{fail} \mid \textrm{enter failure mode}]  & \geq \Pr\left[\Bin\left(q+2,\frac{1}{q+1}\right)\geq 2\right] \\
    & = 1-\left(1-\frac{1}{q+1}\right)^{q+2} - \frac{q+2}{q+1}\cdot \left(1-\frac{1}{q+1}\right)^{q+1} \\
    & = 1-\left(\frac{q}{q+1}+\frac{q+2}{q+1}\right)\cdot \left(1-\frac{1}{q+1}\right)^{q+1}\\
    & \geq 1-2/e.
    \end{align*}
    Consequently, \Cref{alg:one-sided-arrival} fails with constant probability, at least $(1-1/e)(1-2/e)$, as claimed.
\end{proof}
\color{black}

\end{document}